\documentclass[reprint,pra,amsmath,amssymb, twocolumn,superscriptaddress, raggedbottom]{revtex4-2}

\usepackage{graphicx}
\usepackage{dcolumn}
\usepackage{bm}
\usepackage{bbm}
\usepackage[colorlinks=true, urlcolor=blue,citecolor=blue,anchorcolor=blue,breaklinks=true]{hyperref}
\usepackage[capitalise]{cleveref}
\usepackage{dsfont}
\usepackage{mathtools}
\usepackage{amsthm, thmtools}
\usepackage{amsmath}
\usepackage{empheq}
\usepackage{xcolor}
\usepackage[normalem]{ulem} 
\usepackage{comment}

\newcommand{\microspace}{\mspace{0.5mu}}

\declaretheorem[
]{corollary}
\declaretheorem[
]{proposition}

\def\<{\langle}
\def\>{\rangle}
\def \lket {\left|}
\def \rket {\right\rangle}
\def \lbra {\left\langle}
\def \rbra {\right|}
\newcommand{\ket}[1]{\lket\microspace #1 \microspace\rket}

\newcommand{\bra}[1]{\lbra\microspace #1 \microspace\rbra}

\newcommand{\braket}[2]{\langle #1 \microspace | \microspace#2 \rangle}

\newcommand{\avg}[1]{\langle#1\rangle}

\usepackage{mathtools}

\usepackage{bm}
\let\vec\bm 

\begin{document}
\title{Iterative Quantum Algorithms for Maximum Independent Set\\\small{A Tale of Low-Depth Quantum Algorithms}}

\author{Lucas T.~Brady}
\email{lucas.t.brady@nasa.gov}
\affiliation{Quantum Artificial Intelligence Laboratory, NASA Ames Research Center, Moffett Field, California 94035, USA}
\affiliation{KBR, 601 Jefferson St., Houston, TX 77002, USA}
\author{Stuart Hadfield}
\affiliation{Quantum Artificial Intelligence Laboratory, NASA Ames Research Center, Moffett Field, California 94035, USA}
\affiliation{USRA Research Institute for Advanced Computer Science (RIACS), Mountain View, CA 94043, USA}

\date{\today}
\begin{abstract}
Quantum algorithms have been widely studied in the context of combinatorial optimization problems. While this endeavor can often analytically and practically achieve quadratic speedups, theoretical and numeric studies remain limited, especially compared to the study of classical algorithms. We propose and study a new class of hybrid approaches to quantum optimization, termed Iterative Quantum Algorithms, which in particular generalizes the Recursive Quantum Approximate Optimization Algorithm.  This paradigm can incorporate hard problem constraints, which we demonstrate by considering the Maximum Independent Set (MIS) problem. We show that, for QAOA with depth $p=1$, this algorithm performs exactly the same operations and selections as the classical greedy algorithm for MIS.  We then turn to deeper $p>1$ circuits and other ways to modify the quantum algorithm that can no longer be easily mimicked by classical algorithms, and empirically confirm improved performance. Our work demonstrates the practical importance of incorporating proven classical techniques into more effective hybrid quantum-classical algorithms.
\end{abstract}

\maketitle

\section{Introduction} 
Quantum algorithms have been proposed for tackling a wide range of challenging problems, including combinatorial optimization \cite{Kadowaki1998,Farhi2000,Farhi2014,hadfield2019quantum}. In the eagerness to apply quantum approaches to new problems, arguments for the ultimate goal of quantum advantage are often underdeveloped, with much work demonstrating the amenability of quantum approaches to a problem, without exploring whether and why the quantum approach might be advantageous in a given setting.

Analysis of both classical and quantum algorithms is often challenging, but in the case of hybrid quantum algorithms where a quantum processor is tied in with a classical computer that work together in a nontrivial way to solve a problem \cite{Farhi2014,Peruzzo2014,Cerezo2021,bharti2022noisy}, some illuminating reasoning and analysis can be done.  Often, in analyzing these hybrid algorithms, the classical component isn't treated fully, ignoring tasks that are themselves potentially NP-Hard \cite{Bittel2021} or, as in the case discussed in this paper, ignoring the sheer power in the classical components on their own. 
Furthermore, these aspects point to the potential of using the quantum device to enhance existing effective classical approaches.

We consider a broad class of hybrid quantum approaches, which we term Iterative Quantum Algorithms (IQA), exemplified by the recently proposed Recursive Quantum Approximate Optimization Algorithm (RQAOA) \cite{Bravyi2020} and Iterative Quantum Optimization~(IQO)~\cite{dupont2023quantum,hadfield2023iterative} more generally. 
These hybrid approaches tackle a problem iteratively in an interactive fashion, with each iteration reducing the size or connectivity of the problem until the optimal solution is trivial to obtain, which is then unwound to produce an approximate solution to the input problem. 
Here, instead of trying to solve the input problem directly, these approaches utilize the quantum computer to help decide on the particular reduction at each step, which can be viewed as a slightly weaker computational task.

Similar iterative classical algorithms and heuristics, especially greedy approaches, have long been analyzed and applied with great success in the optimization community for a wide variety of problems~\cite{halldorsson1994greed,ausiello2012complexity}. 
An important question is when and to what degree quantum resources can enhance these algorithms~\cite{dupont2023quantum} in hybrid settings -- can we design hybrid IQO algorithms that retain classical performance guarantees while meaningfully incorporating near-term quantum devices?

In this paper, we focus on IQO approaches for Maximum Independent Set (MIS), the problem of finding the largest set of vertices in a graph that does not have any edges between elements of the set.  
MIS is a prototypical NP-hard constrained optimization problem with numerous applications in science and industry~\cite{agarwal1998label,butenko2003maximum,vahdatpour2008theoretical,wurtz2022industry}. 
We review some well-established classical iterative algorithms and then introduce a novel but natural iterative quantum algorithm for MIS based on the Quantum Approximate Optimization Algorithm (QAOA) \cite{Farhi2014,hadfield2019quantum}. 
In the course of our analysis, we show that simplest realization using depth $p=1$ QAOA only has access to the exact same information as the classical algorithm, and then numerically demonstrate that it always uses that information in the exact same way as the classical algorithm.  Therefore, in this simplest setting, there is no quantum distinction, much less advantage \footnote{Indeed, for QAOA with $p=1$, analytic formulas for observable expectation values are obtainable~\cite{wang2018quantum,hadfield2018quantum,ozaeta2022expectation, marwaha2022bounds}. }.

On the other hand, the iterative quantum approach and corresponding classical algorithm then perform comparably, which is much better than if the quantum algorithm had been employed non-iteratively, where performance is much closer to that expected of random guessing. Our results demonstrate that 
utilizing quantum devices in novel ways may facilitate squeezing much higher performance from fixed real-world devices in the near-term than non-iterative approaches would otherwise allow. 

\begin{figure}
\includegraphics[width = 0.48\textwidth]{./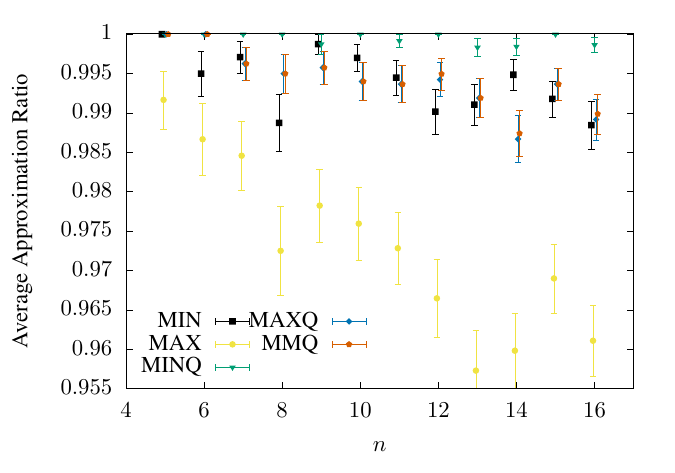}
\caption{The approximation ratio (achieved set size divided by the true maximum independent set size) for various models as a function of $n$ for Erd\"os-R\'enyi graphs with edge probability $q = 1.2\ln(n)/n$.  Each data point is averaged over two hundred randomly chosen graphs, and the error bars show the standard error of the mean.  The ``Q'' algorithms are all performed with $p=2$ QAOA as the quantum component, and ``MINQ'' and ``MAXQ'' refer to the quantum algorithms that reduce to the corresponding classical greedy algorithm at $p=1$.  ``MMQ'' is an IQO algorithm that chooses whether to fix bits on or off at each step as described in Section~\ref{ssec:onoff}. The non-iterative QAOA performance is too low to appear on this plot with the approximation ratio ranging from $0.69$ at $n=5$ down to $0.495$ at $n=16$.  The offsets along the $x$-axis are for visual clarity; all $n$ values are integers.}
\label{fig:p=2_unweighted}
\end{figure}

In the rest of the paper, we 
consider classical greedy algorithms for MIS in Section \ref{sec:classical} and then develop and analyze related iterative quantum approaches in Section~\ref{sec:general_theory}.  We then analyze this basic algorithm for depth $p=1$ in Section~\ref{sec:p=1} before considering higher depth in Section~\ref{sec:higherdepth}.  These quantum algorithms can be further expanded and improved as we discuss in Section~\ref{sec:modifications}.  These expanded and higher depth quantum algorithms do outperform the classical greedy algorithms, but at the same time, they have access to additional resources that could potentially be incorporated into more advanced classical algorithms.  
We then turn to requirements of the quantum devices toward achieving a bonafide quantum advantage.
In Section \ref{sec:higherdepth}, we also provide analytic reasons to believe that on average $n$ vertex graphs, QAOA with depth only $\mathcal{O}(\log n)$ might have enough power for Iterative Quantum Algorithms applied to MIS to produce advantageous approximation ratios.  While our results do suggest paths forward to more effective quantum algorithms, we look forward to more comprehensive comparisons to classical methods as less noisy and larger-scale quantum systems become available.

\paragraph*{Related work:}
Several distinct approaches related to QAOA have been proposed for the Maximum Independent Set problem, differentiated in particular by different means of dealing with the problem's hard constraints~\footnote{Approaches for MIS not directly related to QAOA have also been proposed such as Ref.~\cite{yu2021quantum,djidjev2018efficient}.}.
QAOA approaches based on including penalty terms were previously considered in Refs.~\cite{farhi2020quantum_t,farhi2020quantum_w,saleem2023approaches,cain2022qaoa}. 
A related approach that utilizes Rydberg blockade effects to suppress transitions to infeasible states 
has seen promising development and early experimental implementation on atomic array devices~\cite{pichler2018computational,pichler2018quantum,ebadi2022quantum,wild2021quantum,byun2022finding,kim2022rydberg,nguyen2023quantum,lanthaler2023rydberg,wurtz2022industry,cain2023quantum}. 
Finally, an alternative general approach that modifies the QAOA operators and initial state to only explore the subspace of feasible states was proposed in Ref.~\cite{hadfield2019quantum} and further explored in Refs.~\cite{saleem2020max,saleem2021quantum,tomesh2022quantum,tomesh2022quantumc,saleem2023approaches}.

Whereas these approaches typically solve the problem directly, we employ an iterative approach that utilizes the quantum device to help guide the selections made by a classical problem reduction algorithm. 
Clearly, different problem reduction rules are possible, and IQO is by no means limited to using QAOA. 
We describe iterative algorithms beyond the quantum gate model in Sec.~\ref{ssec:annealing}. 
As mentioned, our protocol generalizes RQAOA~\cite{Bravyi2020,bravyi2020hybrid,bravyi2021classical} as well as a number of preceding classical algorithms, and is closely related to a number of recently proposed iterative schemes~\cite{
hadfield2023iterative,Patel2022,ayanzadeh2022quantum,Bae2023,wagner2023enhancing,dupont2023quantum}. 
We provide details and analysis of the general Iterative Quantum Algorithm framework in a companion paper~\cite{hadfield2023iterative,IQO}, including a wider variety of selection and reduction rules, and applications to different classes of constrained optimization problems. 

During the preparation of this paper, a preprint appeared also considering an Iterative Quantum Algorithm for Maximum Independent Set, 
which the authors called Quantum-Informed Recursive Optimization (QIRO) \cite{Finzgar2023}.  This algorithm is similar to our proposed iterative algorithms, most resembling our MMQ algorithm below but considering both one and two qubit correlators instead of just single qubit expectation values.  
Our work serves to provide expanded theoretical backing behind these approaches, especially with regard to the relation between 
IQA and classical greedy algorithms, and particularly with a view beyond QAOA with depth $p=1$.  
In Ref.~\cite{Finzgar2023} the authors primarily considered $p=1$ QAOA as the quantum subroutine (though suggested that higher $p$ would be advantageous), as well as a quasi-adiabtic evolution on neutral-atom quantum hardware.
An important difference is that our MINQ algorithm is theoretically guaranteed to perform as well as classical MIN, a bar that the QIRO algorithm is not known to meet. 
Below we provide reasoning for why QAOA with $p=1$ is a poor choice as the quantum sub-routine for iterative algorithms in terms of obtaining a quantum advantage and explore higher depth alternatives.

\section{Classical Greedy Algorithms}
\label{sec:classical}
Given a graph $G=(V,E)$ with $|V|=n$ vertices, an independent set is a subset of the vertices with no graph edge between any pair. 
The Max Independent Set problem seeks to determine the largest cardinality independent set in $G$. MIS is a well-known NP-hard problem, and remains hard to approximate, even for bounded degree  graphs~\cite{bazgan2005completeness}.

Classical greedy algorithms are known to perform surprising well~\cite{halldorsson1994greed,schuetz2022combinatorial,angelini2023modern,schuetz2023reply} on MIS problems. 
There are numerous greedy algorithms for this problem (and many other types of heuristics), but one of the most common is called MIN based on the selection rule employed:
\begin{enumerate}
    \item Choose a vertex in the graph with the lowest degree.
    \item Add this vertex to your solution set.
    \item Delete the vertex and its neighbors from the graph.
    \item Repeat steps 1--3 until the graph is empty.
\end{enumerate}

An alternative to this would be MAX which instead deletes the highest degree vertex at each step until a completely disconnected graph remains, providing the solution.  We primarily compare to MIN since it is the better performing algorithm. Nevertheless 
as we will see both are relevant points of comparison when analyzing iterative quantum algorithms.

Performance bounds for the MIN algorithm are known.  
In particular, on graphs with maximum vertex degree~$\Delta$, MIN achieves approximation ratio at least~$3/(\Delta+2)$, meaning for any graph an independent set of size this fraction or better times the optimal~\cite{halldorsson1994greed} is output.   
Interestingly, the theoretical analysis doesn't rely on choosing the minimum degree vertex to delete; instead the bounds just rely on the vertex being a critical (near minimal) vertex \footnote{A vertex $v$ is said to be critical if its graph degree is at most the average degrees its neighboring vertices.}.

We remark that the classical greedy algorithms we consider here are not necessarily state of the art for the general MIS problem.  While other better algorithms may exist in different problem settings, we compare to greedy algorithms because of their close similarity to Iterative Quantum Algorithms~\footnote{Indeed, the greedy algorithms considered can be viewed as special cases of IQA where the quantum device is ignored and a purely classical selection rule is employed.}.

\section{Iterative Quantum Algorithms for Max Independent Set}
\label{sec:general_theory}

To encode the MIS problem, let $x_i\in\{0,1\}$ indicate membership of vertex $i$ in the candidate solution set with $i=1,\dots, n=|V|$. Then the size of the set corresponding to the bit string~$x$ can be expressed as $r(x)=\sum_{i=1}^n x_i$. 

In addition to the reward function~$r(x)$, the independent set constraint requires that each edge has at most one of its vertices set to one.  This can be encoded as the kernel of the penalty function $p(x) = \sum_{\langle i, j\rangle} x_i x_j$, where the sum is over all pairs of vertices that share an edge, denoted by $\langle i,j\rangle$.  The constrained optimization problem can then be transformed into an unconstrained optimization problem, using a positive Lagrange multiplier $\lambda$ such that we seek to minimize
\begin{align}
    \label{eq:c_classical}
    c(x) &= -2\,r(x) + 4\,\lambda\, p(x)\nonumber\\
    &= -2\sum_{i=1}^n x_i + 4\,\lambda \,\sum_{\langle i, j\rangle} x_i x_j.
\end{align}
Here the constant factors are taken for convenience when we convert to qubits shortly.

The goal is then to minimize $-r(x)$, subject to the hard constraint that the state satisfies $p(x)=0$. We next convert this to a Hamiltonian~$\hat{C}$ acting on qubits which we can do by encoding the $x_i$ as Pauli-$z$ operators.  
Here we follow the convention $x_i \to \frac{1}{2}(I+\sigma_i^{(z)})$ which maps the $1$ value of the $i$th bit to the $+1$ eigenvalue of $\sigma^{(z)}$ for the $i$th qubit, and similarly $0$ to $-1$.

For quantum approaches, typically we seek to create a quantum state that minimizes the expectation value, $\langle \hat{C}\rangle$, which we then repeatedly prepare and measure in the computational basis in order to sample from possible solution candidates. 
We remark that in practice any infeasible bit strings obtained can always be corrected to yield an independent set of size at least that of the penalized cost value, see App.~\ref{app:MIScorr}.

The problem Hamiltonian for MIS then becomes
\begin{equation}
    \label{eq:C}
    \hat{C} = -\sum_{i=1}^n (\sigma_i^{(z)}+I)+\lambda \sum_{\langle i, j\rangle} (\sigma_i^{(z)}+I)(\sigma_j^{(z)}+I).
\end{equation}

\subsection{Iterative Quantum Optimization}

Iterative Quantum Algorithms are a new paradigm for boosting the performance of noisy, low-depth quantum algorithms using an interactive approach.  
While recent proposals~\cite{Bravyi2020,dupont2023quantum} are primarily based on QAOA, it is important to remark that similar iterative approaches are easily extendable to incorporate a wide variety of potential quantum ans\"atze, algorithms, and devices~\cite{IQO}. Nevertheless for our MIS application considered in the remainder of this paper we will focus on the QAOA case, with the exception of a brief foray into Iterative Quantum Annealing in Sec.~\ref{ssec:annealing}.

RQAOA~\cite{Bravyi2020} was based off of the standard QAOA ansatz which starts from the uniform superposition state 
\begin{equation}
    \label{eq:varphi}
    \ket{\varphi_0} = \frac{1}{\sqrt{2^n}}\sum_{z\in\{-1,1\}^n} \ket{z},
\end{equation}
where we label computational basis vectors $\ket{z}$ by their binary representation.  Given parameters $\vec{\gamma}$ and $\vec{\beta}$, the problem Hamiltonian~$\hat{C}$, and a simple mixer Hamiltonian~$\hat{B}$, the QAOA state is then 
\begin{equation}
    \ket{\psi(\vec{\beta},\vec{\gamma})} = e^{-i\beta_p\hat{B}}e^{-i\gamma_p \hat{C}}
    \dots
    e^{-i\beta_1\hat{B}}e^{-i\gamma_1 \hat{C}}
    \ket{\varphi_0}.
\end{equation}
The input $p$ determines the circuit depth, and the parameters $\gamma,\beta$ can be optimized variationally or by other means. 
RQAOA took QAOA applied to MaxCut problems as a component, and applied another classical outer loop on top of it.  This outer loop looked at the two-bit correlators (expectation values $\langle \sigma_i^{(z)}\sigma_j^{(z)}\rangle $) for all pairs of qubits sharing an edge in the graph.  It selects the edge with the largest correlator in absolute value and then reduces the size of the problem by fixing those qubits to be either correlated or anti-correlated (depending on the sign of the correlator). In particular, fixing $\sigma_i^{(z)}\sigma_j^{(z)}=\pm 1$ can be implemented by direct substitution of $\sigma_i^{(z)}=\pm \sigma_j^{(z)}$ into the cost Hamiltonian $\hat{C}$, which always reduces the problem by at least one variable. The process is then repeated on this increasingly smaller graph until the problem becomes trivial to solve.   
Then, the sequence of fixing rules is reversed to recover the candidate solution to the input problem.

While more general cost functions were briefly considered for RQAOA in \cite[App.~C]{Bravyi2020}, numerous aspects were not explored in detail, in particular how to apply the algorithm to different classes of constrained optimization problems. In \cite{IQO} we detail how in the constrained case, the simple variable substitution of RQAOA generalizes to \textit{logical inference}, which can lead to more substantial reductions depending on the reduction rules considered. Indeed, as we will see shortly for MIS, fixing a vertex to be in the candidate set leads to a more significant problem reduction than the case of removing a vertex from the graph. 
RQAOA and related iterative approaches also easily incorporate alternative selection and reduction rules, for instance we are by no means limited to two-bit correlators that correspond to terms of the cost Hamiltonian. 
For MIS, estimating and fixing vertex values is a natural reduction step in terms of easily maintaining the independent set constraint.
Hence, our hybrid quantum algorithms can be viewed as generalizations of classical greedy algorithms for MIS to single-bit expectation values. We further consider the inclusion of two-bit correlators for MIS in Sec.~\ref{sec:quadraticRules}.

\subsection{Quantum-Enhanced Greedy Algorithm for MIS} \label{sec:qalg}

Our iterative algorithm follows the same structure as the greedy algorithm given above, with only the first step modified to be:
\begin{enumerate} 
   \item[1'.] 
   Choose a vertex in the graph with the largest single variable expectation value
\begin{equation}
    J_j(\vec{\beta},\vec{\gamma}) = \bra{\psi(\vec{\beta},\vec{\gamma})}\sigma_j^{(z)}\ket{\psi(\vec{\beta},\vec{\gamma})}
\end{equation}
with respect to the QAOA state $\ket{\psi(\vec{\beta},\vec{\gamma})}$.
\end{enumerate}
Here at each iteration we assume the QAOA state has been updated for the reduced problem Hamiltonian and suitably optimized. The single-bit correlators can then be estimated from repeated preparation and measurement of the state.

We remark that, distinct from RQAOA, our selection procedure considers actual instead of absolute values.
Furthermore, 
we employ a \textit{one-sided} selection rule 
in that selected vertices are fixed to be in the set but generally never fixed to not be.

In the next subsection, we will analyze this $J_j$ for $p=1$ depth QAOA to determine how it behaves compared to the classical greedy algorithm. 
We find that Iterative QAOA-1 uses the same graph degree information as MIN such that we expect closely related performance between the two algorithms. We remark that both algorithms contain a hidden degree of randomness in how ties are broken, i.e. when there are multiple vertices of the same minimal degree or maximal $J_j$ value. We assume ties are either broken randomly, or with respect to a fixed ordering of the graph vertices. Clearly, this can lead to different performance across the same or permuted instances. To alleviate this effect we use the same fixed ordering to break ties across the different algorithms considered in our numerical results to follow.

\section{Analysis for $p=1$ QAOA}
\label{sec:p=1}

In this section we analyze the output of QAOA-1 for MIS problems.  We consider QAOA starting from the equal superposition state $\ket{\varphi_0}$ of Eq.~\ref{eq:varphi}, and utilizing the standard transverse-field mixer 
\begin{equation}
    \label{eq:B}
    \hat{B} = -\sum_{i=1}^n \sigma_i^{(x)},
\end{equation}
here negated for minimization.
Throughout, the notation $\sigma_i^{(s)}$ represents the Pauli matrix for the $i$th qubit along the $s$ axis, $s=x,y,z$.  

For our iterative algorithm with QAOA-1, 
the objects we care about will be the expectation values of each of the bits in the QAOA state, given by
\begin{equation}
    \label{eq:Ji_initial}
    J_j(\beta,\gamma) = \bra{\varphi_0}e^{i\gamma \hat{C}}e^{i\beta \hat{B}}\sigma_j^{(z)}e^{-i\beta \hat{B}}e^{-i\gamma \hat{C}}\ket{\varphi_0}.
\end{equation}
For QAOA-1, explicit formulas for observable expectation values can be obtained~\cite{wang2018quantum,hadfield2018quantum,ozaeta2022expectation, marwaha2022bounds}. 
In Appendix \ref{app:Zpathsum}, we use a path-sum approach to calculate an exact closed form for $J_j(\beta,\gamma)$ in terms of the graph degrees of the vertices, $d_j$:
\begin{align}
    \label{eq:Ji_final}
    J_j(\beta,\gamma) = \sin(2\beta) 
    (\cos(2\gamma\lambda))^{d_j}\sin(2\gamma(1-d_j\lambda)).
\end{align}
Independent of the values of $\gamma$ and $\beta$, there is one clear conclusion from this result: Iterative QAOA-1 in selecting the qubit with the highest $J_j$ will only be using information about the degree of the vertices.  Therefore, Iterative QAOA-1 will not utilize any information that was not already accessible to the classical greedy algorithm.  

Moreover, under reasonable assumptions on $\beta,\gamma$  the magnitude of $J_j(\beta,\gamma)$ is seen to decrease with increasing $d_i$, such that we expect the lowest degree vertices to be heavily favored. We remark that rigorously proving this would require characterizing the optimal angles for all possible problem instances which appears challenging. 
While it is of course possible that the QAOA-1 parameters are set such that the iterative outer loop picks a vertex other than the lowest degree vertex, our numerical simulations discussed below indicate that in practice, the lowest degree vertex is always picked. Hence the performance of Iterative QAOA-1 is seen to mimic that of the classical greedy algorithm MIN.

Of course, if we instead considered Iterative QAOA-$p$, then as we increased $p$ the vertex expectation values depend in increasingly complicated ways on the distance-$p$ neighborhood in the graph for each vertex~\cite{Farhi2014,wang2018quantum}, and the same approach does not appear tractable in general. 
We further discuss the $p>1$ case in Sec.~\ref{sec:higherdepth} and 
consider the possible QAOA depth required to achieve a quantum advantage. 
Nevertheless the $p=1$ case provides a valuable guide for helping to understand the behavior of iterative quantum algorithms~\footnote{While, as we have seen, the lowest depth quantum expectation values can be efficiently computed classically, efficient classical sampling from said circuits is not believed to be possible~\cite{Farhi2014,farhi2016quantum} which hints at further possibilities for quantum advantage in practice.}.

\subsection{Comparing to MAX}

Above we considered the greedy algorithm MIN that chooses the lowest degree remaining vertex to be part of the solution set.  An alternative greedy algorithm is MAX which iteratively chooses the highest degree vertex to be deleted from the graph until only disconnected vertices are left, constituting an independent set.

Our quantum algorithm above can be modified to mimic MAX instead at low depth through the simple alteration of ranking the graph vertices by lowest (generally most negative) $J_j$ value.  Then whichever vertex is lowest is removed from the graph as in MAX.  Since the $J_j$ value still only depends on the degrees of the vertices, this still has as much information as classical MAX.  

Due to their similarity to classical MIN and MAX, we refer to the Iterative Quantum Algorithm fixing bits to be one as ``MINQ,'' and to the the variant that fixes bits to be zero as ``MAXQ.''
In our numerics below, we consider both MAX and MIN variants for comparison.

\subsection{Numerical Verification} 
For numerical verification of these results, in particular that for $p=1$ the quantum and classical algorithms should perform the same, we swept through a number of qubits $n=5$ up to $n=18$. 
For each $n$ we generated 400 (often more in testing) unweighted, connected Erd\"os-R\'enyi graphs, chosen at random with the probability of including each edge being $q = 1.2\ln(n)/n$. (Note that repeats were allowed, and that for small $n$ this just included all possible graphs.)

For each instance we ran Iterative QAOA-1 as described above on a noiseless classical simulator, optimizing the $\gamma$ and $\beta$ angles using Nelder-Mead.  We ran both a MINQ version that froze the qubit with the highest Pauli-$z$ expectation value to be one (and all its neighbors to be $-1$) and a MAXQ version that froze the qubit with the lowest value to be $-1$.   For each reduction step, we checked whether the frozen qubit was indeed in the set of lowest or highest degree vertices, respectively for MINQ and MAXQ. If Iterative QAOA-1 ever made a choice that deviated from a choice that could have been made by the corresponding classical greedy algorithm, we designated a flag to be flipped to highlight that something not explained by the classical algorithm was happening.  Across all our trials with $p=1$, this flag was never flipped.

In Sec.~\ref{sec:weightedCase} we will describe the same algorithm generalized to weighted graphs.  In numerical experiments for that case, this flag was often flipped. 

We consider empirical algorithm performance for Iterative QAOA-2 in Sec.~\ref{sec:numericalPerf}.

\section{Higher Depth Iterative QAOA}
\label{sec:higherdepth}

For general depth $p$ QAOA, the required expectation values for our iterative approach are 
\begin{equation}
    J_j(\vec{\beta},\vec{\gamma}) = \bra{\varphi_0}\hat{U}^\dagger(\vec{\beta},\vec{\gamma})\sigma_j^{(z)}\hat{U}(\vec{\beta},\vec{\gamma})\ket{\varphi_0},
\end{equation}
where $\hat{U}(\vec{\beta},\vec{\gamma}) = \prod_{i=1}^p e^{-i\beta_i\hat{B}}e^{-i\gamma_i \hat{C}}$. 
For this general setting obtaining closed form results analogous to the $p=1$ case is challenging. A path-sum or other approach can again be employed, but quickly becomes unwieldy to the point where performing similar analysis appears impractical. 
Though slightly more manageable expressions can be obtained in restricted settings, such as  triangle-free and bounded degree graphs of small degree, similar difficulties persist~\cite{hadfield2018quantum}.

\subsection{Iterative QAOA-2}
Though obtainable, even expressing the $p=2$ version of Eq.~(\ref{eq:Ji_final}) is too long to fit reasonably in this paper \footnote{For this general calculation we used symbolic manipulation in Mathematica.}.  
The end result is the $J_i$ values now depend on properties of all vertices in the distance $1$ neighborhood of the $i$th vertex, in particular but not limited to the degrees of the neighboring vertices and number of triangles in the graph containing vertex $i$.

In other words, Iterative QAOA-2 takes into account information about the local neighborhood around each vertex when ranking them. 
All of this information is accessible to classical algorithms, and there are  algorithms that take into account a portion of this information when making a decision.  However, by the time we reach $p=2$, the quantum algorithm is taking into account this information in a non-trivial way that is not easily translatable to a classical equivalent, and as the list of information considered grows, it quickly becomes unclear what appropriate classical heuristics would even be.

In App.~\ref{sec:qaoa2} we apply an alternative analysis approach~\cite{hadfield2018quantum} based on conjugation by Pauli operator to illuminate this. Though the aforementioned difficulties remain, we apply small-angle analysis~\cite{hadfield2022analytical} to show that for QAOA with $p=2$, the leading-order expression for $J_j(\beta_1,\beta_2,\gamma_1,\gamma_2)$ 
with respect to $\gamma_2$ 
is given by
\begin{align}
&\cos(2\beta_2)\sin(2\beta_1) \sin(2\gamma_1 (1-\lambda d_j)) \cos^{d_j}(2\gamma_1 \lambda)\\
&-\nonumber \sin(2\beta_2)\cos(2\gamma_2 c_j) \cos^{d_j} (2\gamma_2 \lambda)\,\times
 \\
&\,\,\,\,\bigg( \cos(2\beta_1) \sin(2\gamma_1 (\lambda d_j -1)) \cos^{d_j}(2\gamma_1 \lambda) \nonumber\\
&\,\,\,\,- \nonumber (2\gamma_2 \lambda) \cos(2\gamma_1 (\lambda d_j-1)) \cos^{d_j}(2\lambda\gamma_1) \,\, + \dots \bigg).
\end{align}
Hence to lowest order we see that the value of $J_j(\beta_1,\beta_2,\gamma_1,\gamma_2)$ for $p=2$ is determined by a linear combination of its $p=1$ value $J_j(\beta_1,\gamma_1)$, and a correction factor. 
Observe that these terms again depend only on the degree of vertex~$j$. As $\gamma_2$ is increased, additional significant terms appear in the correction factor. Term at the next lowest order depend on the number of triangles in the graph containing vertex $j$. Subsubsequent terms depend on vertex subsets containing $j$ and $3$ of its neighbors, and so, which leads to combinatorial challenges in analyzing the general case.

\subsection{Numerical Performance} 
\label{sec:numericalPerf}

In  Fig.~\ref{fig:p=2_unweighted} we plot the approximation ratio (achieved set size divided by true maximum independent set size) averaged over two hundred random, connected Erd\"os-R\'enyi graphs with edge probability $q = 1.2\ln(n)/n$, with the error bars depicting the standard error of the mean.  The true maximum independent set size was determined through a brute force exhaustive search that was feasible at these small problem sizes.  All the Iterative Quantum Algorithms in this plot are run with $p=2$ QAOA as the quantum component with a constraint weighting $\lambda = 1$.  Note that QAOA itself
does not appear on this plot because the expected QAOA approximation ratio ranges from $0.69$ at $n=5$ down to $0.495$ at $n=16$.

The quantum algorithms are labeled here with a ``Q,'' with MINQ and MAXQ referring to the iterative quantum algorithms with the same bit fixing rules as their classical counterparts.  MMQ is an iterative quantum algorithm (see Section \ref{ssec:onoff}) that incorporates elements of both MIN and MAX. 

From our results in the figure it is obvious that the Erd\"os-R\'enyi graphs we are using are better served by MIN, as expected, though this situation could change for other models or problem variants. 
For QAOA with $p=1$ we explained and verified the same performance between MINQ and MIN.  
For $p=2$ we see that the iterative quantum algorithms consistently outperform the corresponding classical algorithm, to a significant degree, especially MINQ which obtains approximation ratio very close to $1$. 
While this is an exciting indicator, unfortunately, the system sizes accessible via our simulations are small enough that all algorithms (except for QAOA itself) are performing very well, and the sizes are much too small to attempt extrapolation in a meaningful way; we expect orders of magnitude increase in problem size may be required to access regimes truly challenging for classical algorithms. 
Nevertheless these results point to the promise of iterative quantum optimization on increasingly larger quantum devices.

Numerical simulations of iterative quantum algorithms for unweighted MIS with $p=3,4,5$ 
QAOA do not show any statistically significant departure from the behavior of algorithms with $p=2$ QAOA for the range of $n$ we consider.  This story does change however for Weighted MIS as we discuss in Section \ref{sec:modifications}.

\subsection{Depth Scaling with Width of Graph}

While we do not explicitly work out the path-sum formulation for QAOA with $p>2$, the results of these calculations and the general structure of how the path-sum works make it obvious how this behavior will scale.  A depth $p$ QAOA algorithm when ranking vertices to eliminate in the iterative step will consider all information about a selected vertex, all vertices connected to it by paths of length $p-1$, and the full connectivity of those vertices to each other and other parts of the graph.

Notably, this implies that when $p$ is comparable to the graph diameter, iterative QAOA will be ranking each vertex using information from the entire graph.  Given that the width of a random $n$ vertex graph typically grows as $\log{n}$, this gives hope that QAOA or its iterative versions might be able to tackle Maximum Independent Set with depth only logarithmic in the size of the graph.  This hope should of course be tempered~\cite{Bravyi2020,farhi2020quantum_t,farhi2020quantum_w} because while the algorithm is able to consider the entire structure of the graph, it still only has logarithmically many variational parameters with which to work.  Just because all graph properties are being included in the calculations does not mean that $\mathcal{O}(\log{n})$ parameters allow for enough expressivity to use that information and rank the vertices appropriately.
At the same time, restricting to relatively few parameters is desirable for the practical scalability of variational algorithms such as QAOA with respect to parameter optimization challenges.

We remark that for fixed QAOA depth $p$, as the problem becomes subsequently smaller in size, and hence easier, we expect QAOA performance to 
relatively improve. Furthermore, smaller problems eventually facilitate deeper circuits using the same quantum resources that can yield further improvements.

While these observations regarding the scaling or vertex ranking for iterative QAOA are encouraging for the long term prospects of IQAs, this should not be taken as conclusive evidence of their superiority.  These algorithms have in principle the ability to consider more information than a classical greedy algorithm could.  It encouragingly should be noted that in order to even enumerate all these features in the case above a classical algorithm would require time scaling at least proportional to $n$, much longer than the $\mathcal{O}(\log{n})$ layers of the quantum algorithm.
Of course mere consideration of all these features does not necessarily imply that the features are being used efficiently to create the ranking.  However, as discussed, analytical means of discovering how the quantum algorithm is exactly using all these features for $p$ in this regime appears impractical.

\section{IQO Generalizations}
\label{sec:modifications}

Here we consider further variants to our Iterative Quantum Algorithm aiming toward obtaining even better performance. 
The most obvious extension to go beyond the classical greedy algorithm is of course to consider increasing QAOA depths to $p\gg 1$.  As we discussed in the previous section, this serves to extend the local search space that the algorithm considers around each vertex, until the algorithm is eventually able to consider the entire graph.   
While deep circuits are a path to go beyond the power of classical algorithms generically, assuming adequate quantum hardware has become available, it comes with additional challenges such as those related to parameter setting~\cite{Cerezo2021,bharti2022noisy}. Especially in cases where only a local optima in parameter space can be reasonably obtained, iterative approaches offer a means of potentially improving upon sampling of the state directly.

There are numerous variants of IQO that can be employed toward more sophisticated algorithms. 
These variants do not have analytic guarantees that they perform better than MIN or $p=1$ MINQ, only that they consider information in a way that is not efficient to mimic on classical computers.

\subsection{Fixing Bits Both On and Off}
\label{ssec:onoff}

In Section \ref{sec:p=1} we described methods that mimic classical greedy MIN and MAX algorithms via iterative quantum algorithms that look for the bits with the highest or lowest single-variable 
expectation values, respectively, and fix that variable accordingly.  
We could also combine these two rules such that at any given step, we can choose to fix either a high bit to be $1$ (and then fix all of its neighbors to be $0$), or fix a low bit to be $0$, depending on the values of the single-bit expectation values.

In the first case where we fix a high bit $x_i=1$, the problem is reduced by $d_i +1$ vertices, and all edges involving these vertices are removed from the problem. 
This reduces the problem Hamiltonian much more significantly than if we had just substituted $\sigma^{(z)}_i\to\pm 1$ directly.
In the second case, a single vertex and its edges are removed. 
Both rules are logical implications of the independent set constraint.

The combined approach of choosing whether to fix high or low bits is also possible in a classical greedy algorithm, but it is  not immediately obvious how the algorithm should choose between a high or low degree vertex at a given step.
On the other hand, as we have shown, for $p=1$ QAOA the lowest degree vertices in the graph will correspond to the highest $J_j$ and the highest degree vertices to the lowest $J_j$.  To determine which to choose we could choose a vertex to add to our independent set (the same way as MIN) versus one to exclude from our independent set (the same way as MAX) based off the sign of the $J_j$ with the highest absolute value.  
The relative magnitudes of these quantities relate to how certain the quantum algorithm is, and even for depth $p=1$ QAOA this information isn't derivable from the degrees of the nodes alone but also depends on the global problem through the optimization of the $\gamma$ and $\beta$ angles.

To design a classical greedy algorithm that mimics this behavior, one possible choice would just be to use Eq.~(\ref{eq:Ji_final}) directly in the classical algorithm, where the greedy choice of high or low is with respect to the 
absolute value of $J_j(\beta,\gamma)$.
The use of Eq.~(\ref{eq:Ji_final}) would necessitate a classical optimization of the $\gamma$ and $\beta$ angles to minimize $\avg{\hat{C}}$ which could be done by employing the closed forms in Eq.~(\ref{eq:Ji_final}) itself and the two qubit correlators derived in App.~\ref{app:2bit}.  
Clearly, alternative discriminators could also be considered. 
We remark that iterative algorithms based instead on classical probabilistic sampling will naturally have metrics 
analogous to $J_j$.

In various figures throughout this paper, we include this expanded iterative quantum algorithm for comparison to other methods.  We refer to this method as ``MINMAXQ'' or ``MMQ'' for short since it incorporates elements of both classical MIN and MAX. While less effective than MIN for Maximum Independent Set, the MMQ generalization is important for the Weighted MIS problem variant we consider next.

\subsection{Weighted Case} \label{sec:weightedCase}
Consider Weighted MIS, the generalized problem where each vertex additionally comes with a reward weight $r_i$ for its inclusion in the independent set.
In this case the problem Hamiltonian is given by  
\begin{equation}
    \hat{C} = -\sum_{i=1}^n r_i (\sigma_i^{(z)}+I)+\lambda\sum_{\langle i, j\rangle} (\sigma_i^{(z)}+I)(\sigma_j^{(z)}+I).
\end{equation}
Proceeding as for the unweighted case, the single-variable expectation values used in the ranking now become  
\begin{align}
    \label{eq:Ji_weighted}
    J_j(\beta,\gamma) = \sin(2\beta) (\cos(2\gamma\lambda))^{d_j}\sin(2\gamma(r_j-d_j\lambda)).
\end{align}
This is a relatively minor change compared to Eq.~\ref{eq:Ji_final}, but the formula now provides a unique combination of how $d_j$ and $r_j$ are being handled that is distinct from how simple greedy algorithms behave.
Classically, both MIN and MAX can be expanded to weighted versions, (W)MIN and (W)MAX respectively \cite{Sakai2003}.  For weighted MIN, rather than picking the vertex, $v_i$, with the lowest degree $d_i$ to add to the proposed set, we instead pick the vertex with the highest ratio $r_i/(d_i+1)$.  Similarly for weighted MAX, we now eliminate the vertex with the lowest ratio $r_i/(d_i(d_i+1))$ from the graph.

\begin{figure}
\includegraphics[width = 0.48\textwidth]{./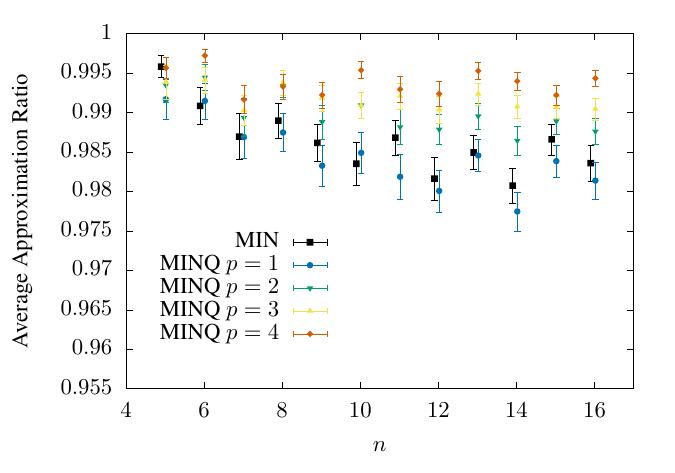}
\caption{The approximation ratio for various models as a function of $n$ for Erd\"os-R\'enyi graphs with $q = 1.2\ln(n)/n$ and vertex weights chosen uniformly at random in the range $[1,2]$.  Each data point is averaged over two hundred randomly chosen graphs, and the error bars show the standard error of the mean.  The MINQ algorithms are quantum iterative algorithms with QAOA at the specified $p$ depth as the quantum component.  
}
\label{fig:sweep_p_weighted}
\end{figure}

In Fig.~\ref{fig:sweep_p_weighted} we plot the performance of (W)MINQ 
for weighted MIS problems where each vertex is assigned a weight uniformly at random from the continuous range $[1,2]$, for various QAOA depths to show how increasing $p$ effects the algorithm.  Note that in contrast to the unweighted case, here MIN and MINQ with $p=1$ are distinct algorithms, though empirically there does not appear to be much significant difference between them.  For $p>1$, there is a statistically significant difference from MIN, but the returns may be diminishing for moderate increases in $p$. 
Also note that while our results indicate that higher $p$ iterative quantum algorithms take more information into account, this does not mean that their performance will be strictly better than classical greedy algorithms (or even than lower depth MINQ algorithms).  It is intuitive to expect them to perform better, but there can be cases, such as with $n=5$ here, where that is not the case.

\begin{figure}
\includegraphics[width = 0.48\textwidth]{./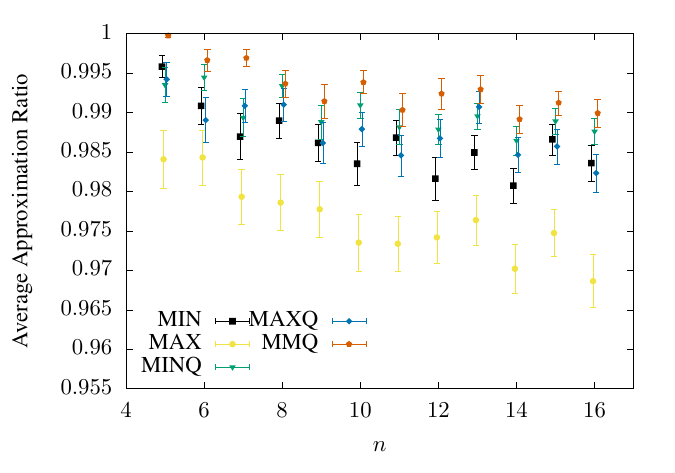}
\caption{
The approximation ratio (achieved set weight divided by the true maximum independent set weight) for various models as a function of $n$ for Erd\"os-R\'enyi graphs with $q = 1.2\ln(n)/n$ and vertex weights chosen uniformly at random in the range $[1,2]$.  Each data point is averaged over two hundred randomly chosen graphs, and the error bars show the standard error of the mean.  The ``Q'' algorithms are all performed with $p=2$ QAOA as the quantum component, and ``MinQ'' and ``MaxQ'' refer to the quantum algorithms that reduce to the corresponding classical greedy algorithm at $p=1$.  ``MMQ'' is a iterative quantum algorithm that chooses whether to fix bits on or off at each step as described in Section~\ref{ssec:onoff}.  The offsets along the $x$-axis are for visual clarity; all $n$ values are integers.}
\label{fig:p=2_weighted}
\end{figure}

In  Fig.~\ref{fig:p=2_weighted} we mirror the numerical results of Fig.~\ref{fig:p=2_unweighted}  but for the same class of weighted MIS problems as in Fig.~\ref{fig:sweep_p_weighted}. 
Interestingly, in the weighted case, $p=2$ MINQ is no longer performing significantly better than MIN.  What is obvious though is that MMQ is performing statistically better than any other algorithm tested, emphasizing the point that the preferred algorithm should depend on the problem details.  Since MMQ with $p\geq2$ is again performing selections that are not easily replicable in a classical setting, this provides evidence toward some quantum utility; although, these system sizes are still much too small for any definitive statements.

\subsection{Modifying the Selection Rule}
Another important variant of Iterative Quantum Algorithms is to consider more general selection rules. 
QAOA or another quantum algorithm can be used to rank vertices, edges, or hyperedges in any way desired, and the ranking can be used to reduce the problem. 
In addition to purely looking at the highest or lowest $J_i(\beta,\gamma)$ in the case of Weighted MIS, it is also quite natural to pick the highest or lowest weighted expectation value, $r_i J_i(\beta,\gamma)$, which more closely correspond to the change in objective function value resulting from each possible choice. More generally, other quantum observables can be considered, for instance both single-bit expectations and two-bit correlators could be included, with selection based on some function of their expectation values. 
These modifications lead to different ranking rules but do not otherwise modify the basic algorithm. 
We emphasize that the observables considered are not limited to terms that appear in the problem Hamiltonian, and that while the reduction step often matches the selection rule, this is not strictly necessary. 
The suitability of such selection rule modifications is problem dependent and we save a more thorough investigation for future work~\cite{IQO}.

We remark that for the original RQAOA prescription as applied to the MaxCut, for most input instances the problem is quickly transformed into a sequence of Weighted MaxCut instances. Hence the weighted selection rule can also be applied there. A distinct example of a selection rule for this problem is employed in the algorithm of Ref.~\cite{dupont2023quantum}.

\subsection{Quadratic Fixing Rules} 
\label{sec:quadraticRules}

The problem reduction step can also be modified, with a variety of possible choices, often designed in tandem with the selection rule. As mentioned previously, these include one-sided variants, where a variable or correlation can be fixed to one value but not the other. For constrained problems, clearly, reduction rules must take into account the problem constraints, and can often be designed in a problem structure-preserving way. 
Here we present a modification to our algorithm that goes beyond the single-bit setup considered above.
 
Specifically, we propose a one-sided iterative quantum algorithm that resembles RQAOA~\cite{Bravyi2020} by considering expectation values corresponding to edges in the graph instead of vertices:
\begin{enumerate}
    \item Run QAOA on the problem and measure the resulting two bit correlators $J_{ij}(\beta,\gamma) = \bra{\psi(\beta,\gamma)}\sigma^{(z)}_i\sigma^{(z)}_j\ket{\psi(\beta,\gamma)}$ for all pairs of vertices $i$ and $j$ that share an edge.
    \item Find the edge $(k,m)$ with the lowest (most negative) edge ZZ correlator, $J_{k\ell}$.
    \item Fix the two vertices to be anti-correlated, so that $z_\ell = -z_k$ and make this replacement in the Hamiltonian.
    \item Repeat steps (1)-(3) until the problem is trivial to solve for the remaining vertices.
    \item Back-propagate the information from step (4) through the anti-correlation decisions made at the occurrences of step (3).
\end{enumerate}  
Here, the reduction of Step 3 implies that exactly one of the vertices $k,\ell$ is in the solution set. Note this is no longer the case if we relax the restriction to $J_{ij}$ values corresponding to edges in the graph. 

An important observation with this quadratic algorithm variant is that, as opposed to the single-variable algorithm of Sec.~\ref{sec:qalg}, the output is no longer guaranteed to produce an independent set.  Nothing in this selection rule and reduction via simple substitution necessitates independence of the set.  Nevertheless we can always correct such a bit string to obtain some independent set of comparable cost; see the discussion in App.~\ref{app:MIScorr}.

This algorithm can be further modified analogously to the variants for the single-variable selection case proposed above. 
For instance, we could consider looking for the highest $J_{km}$ which implies correlation.  Since the only way two adjacent vertices can be correlated in a MIS problem is to have them both be logically zero (not in the independent set), this would force us to set both $x_k=x_\ell=0$.  We would then continue until we have a disconnected graph.  
With this variant we are guaranteed to output an independent set, but this algorithm is fundamentally just an amped up version of MAXQ described above where two variables are fixed per iteration instead of one.

In Appendix~\ref{app:2bit} we show the derivation of $J_{ij}(\beta,\gamma)$ for $p=1$ QAOA using the path-sum method.  The resulting correlators depend just on the degrees of the two vertices in the edge as well as the number of triangles the edge is involved in.  This is again all easy information to calculate and suggests the path toward a classical algorithm that could mimic this quantum algorithm. For $p>1$ similar issues concerning analysis and simulation arise as discussed above. 

While we don't anticipate the quadratic variants discussed here to outperform MINQ, they illuminate a number of design choices to consider when constructing Iterative Quantum Algorithms, and similar ideas should be relevant for other problems. 

\subsection{Iterative Quantum Annealing}
\label{ssec:annealing}

\begin{figure}
\includegraphics[width = 0.48\textwidth]{./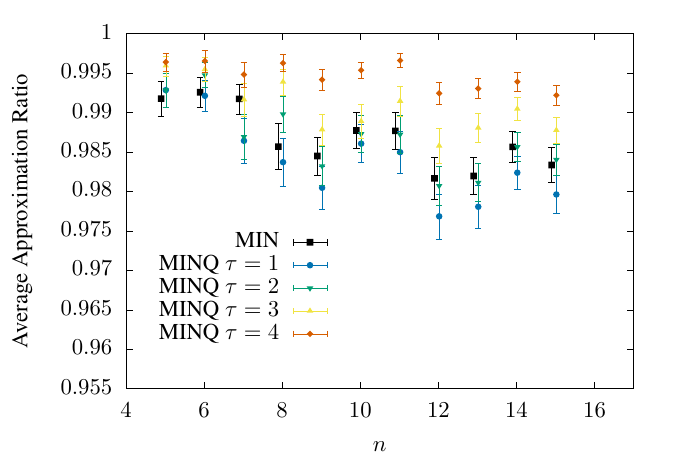}
\caption{The approximation ratio for various models as a function of $n$ for Erd\"os-R\'enyi graphs with $q = 1.2\ln(n)/n$ and vertex weights chosen uniformly at random in the range $[1,2]$.  Each data point is averaged over two hundred randomly chosen graphs, and the error bars show the standard error of the mean.  The MINQ algorithms are quantum iterative algorithms with Quantum Annealing at the specified runtime, $\tau$, the quantum component.  
}
\label{fig:sweep_p_qa_weighted}
\end{figure}

\begin{figure}
\includegraphics[width = 0.48\textwidth]{./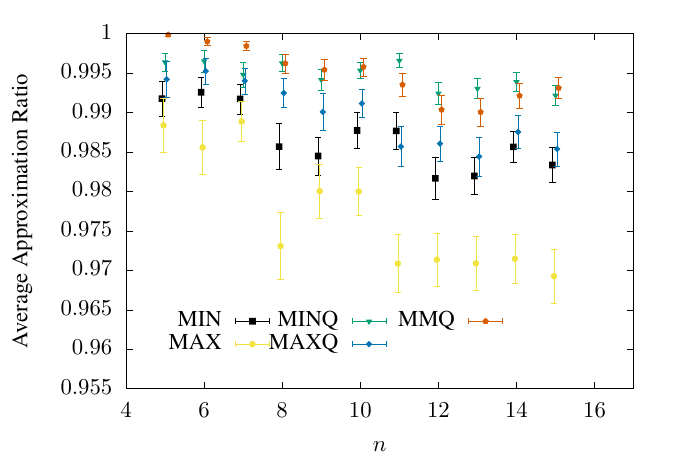}
\caption{
The approximation ratio (achieved set weight divided by the true maximum independent set weight) for various models as a function of $n$ for Erd\"os-R\'enyi graphs with $q = 1.2\ln(n)/n$ and vertex weights chosen uniformly at random in the range $[1,2]$.  Each data point is averaged over two hundred randomly chosen graphs, and the error bars show the standard error of the mean.  The ``Q'' algorithms are all performed with $\tau=4$ Quantum Annealing as the quantum component, and ``MinQ'' and ``MaxQ'' refer to the quantum algorithms that reduce to the corresponding classical greedy algorithm at $p=1$.  ``MMQ'' is a iterative quantum algorithm that chooses whether to fix bits on or off at each step as described in Section~\ref{ssec:onoff}.  The offsets along the $x$-axis are for visual clarity; all $n$ values are integers.}
\label{fig:tau=4_weighted}
\end{figure}

So far, we have considered QAOA as the quantum subroutine in iterative quantum optimization, but, as explained, any number of other quantum or classical choices are compatible. As an example, here we consider switching out QAOA for Quantum Annealing (a.k.a. Adiabatic Quantum Optimization)~\cite{Farhi2000,Kadowaki1998,albash2018adiabatic}, another family of quantum optimization algorithms that use roughly the same ingredients as QAOA, but are based on a continuous-time computational model instead of discrete quantum gates, and without a variational loop for parameter optimization. We remark that the idea of fixing problem variables using data output from a quantum annealing device was previously considered in~\cite{ayanzadeh2022quantum} in the context of finding the true ground state.

Quantum Annealing can be applied using the same problem Hamiltonian of Eq.~(\ref{eq:C}), driver (mixer) Hamiltonian of Eq.~(\ref{eq:B}), and initial state of Eq.~(\ref{eq:varphi}), with the goal of again minimizing $\avg{\hat{C}}$ for the final state. 
Instead of a bang-bang evolution, quantum annealing relies on an annealing schedule, $u(t)\in[0,1]$, such that the evolution can be described by a time-dependent Hamiltonian,
\begin{equation}
    \hat{H}(t) = u(t)\hat{C}+(1-u(t))\hat{B}.
\end{equation}
We will further use a linear annealing schedule, $u(t) = 1-t/\tau$, for a procedure running for total time $\tau$.  Appealingly, this algorithm has theoretical guarantees of finding the ground state for sufficiently large $\tau$, stemming from the quantum adiabatic theorem.

We use this instantiation of Quantum Annealing as a one-for-one replacement for QAOA in our Iterative Quantum Algorithms considered above.  The simulated results are shown in Figs.~\ref{fig:sweep_p_qa_weighted}~\&~\ref{fig:tau=4_weighted}, which exactly mirror Figs.~\ref{fig:sweep_p_weighted}~\&~\ref{fig:p=2_weighted} but with Quantum Annealing used to estimate the required expectation values instead of QAOA. 
(Note that these were done with newly generated problem instances, so the MIN and MAX performance also vary slightly.) 
Theoretical analysis of IQA is challenging with Quantum Annealing due to its continuous nature, but these plots indicate that in practice this switching of subroutines produces similar results.  Observe that $\tau\approx 2$ produces results similar to $p=1$, and $\tau\approx 4$ produces results similar to $p=3$.  Note that in all these cases, the base improvement of the iterative process is observed to be roughly equivalent between QAOA and Quantum Annealing, with similar performance in the base algorithm corresponding to similar performance in the iteratively enhanced algorithm. An important future research direction is to better understand generally the relationship between performance of the base algorithm and its iteratively enhanced variants.

\section{Conclusion}

We have developed novel Iterative Quantum Optimization algorithms for Maximum Independent Set. Our algorithmic framework generalizes RQAOA as well as other iterative quantum approaches. For the lowest QAOA depth $p=1$ realization of our algorithms, we show that our quantum algorithms perform identically to classical greedy algorithms, making the same choices as those algorithms at each iteration. For larger depths we expect still improved performance as QAOA improves with~$p$.  

While beyond the lowest depth realizations our hybrid quantum algorithms are seen to outperform the basic classical greedy algorithms, this shouldn't immediately be taken as evidence of quantum advantage.  As circuit depth increases, the quantum algorithms are accessing information about larger local neighborhoods around each vertex in the graph.  For constant quantum circuit depth, this information is typically readily accessible to classical algorithms.  The classical greedy algorithms we presented do not take the features of these larger neighborhoods into account, but other classical algorithms may. 
  
The benefit of the quantum algorithm is that it can naturally incorporate this extra information. As mentioned, global parameter optimization further incorporates problem structure in an implicit way. 
Additionally, we  
discussed hope that the 
Iterative QAOA circuits could access information about the full graph structure with circuit depth only $\mathcal{O}(\log n)$, 
though it has yet to be shown definitely whether merely considering these additional graph properties suffices for quantum advantage. We reiterate that our framework is compatible with a wide variety of quantum algorithms, for example generalizations of QAOA~\cite{hadfield2019quantum} that enforce problem hard constraints without the need for penalty terms, or hardware-efficient ans\"atze~\cite{leone2022practical,maciejewski2023design}, or even alternative paradigms such as quantum annealing~\cite{djidjev2018efficient}.

Our empirical results are intriguing, but ultimately, the problem sizes accessible with our simulations are too small to truly capture the scaling behavior of all of these algorithms.  It is clear that for low depths the iterative quantum algorithms greatly improve on the results of just the basic quantum algorithm that is embedded in their loop, but much of this power comes from the classical aspects of the hybrid iterative structure.

Iterative Quantum Algorithms provide a useful and potentially powerful way to extend current noisy intermediate-scale quantum algorithms, building off of underlying iterative algorithms and heuristics that also prove effective in purely classical settings. 
While the ability to potentially retain and improve upon classical performance guarantees is alluring, more work is needed toward better understanding the requirements for achieving bonafide quantum advantage. Our work takes important steps towards extending iterative quantum algorithms to wider classes of optimization problems. We remark that our iterative quantum approach for MIS also immediately yields approximation algorithms for the Maximum Clique and Minimum Vertex Cover problems via simple problem transformations, see e.g. \cite[App.~A]{hadfield2019quantum}.
An important future direction is to explore further classes of problems in detail. Finally, as quantum hardware continues to advance, the deployment of more sophisticated quantum circuits should empirically inform the design of even more effective iterative quantum algorithms and heuristics.

\acknowledgements

We thank our colleagues from NASA QuAIL for numerous helpful discussions, in particular M.~Sohaib Alam, Zhihui Wang, Zoe Gonzalez Izquierdo, Shon Grabbe, and Eleanor Rieffel.
This work was supported by the NASA AIST program and by the Defense Advanced Research Projects Agency (DARPA) under Agreement No. HR00112090058 and DARPA-NASA IAA 8839, Annex 114.
L.~T.~B.~is a KBR employee working under the Prime Contract No.~80ARC020D0010 with the NASA Ames Research Center.
S.~H. was supported under NASA Academic Mission Services under contract No. NNA16BD14C.

The United States Government retains and the publisher, by accepting the article for publication, acknowledges that the United States Government retains a non-exclusive, paid-up, irrevocable, worldwide license to reproduce, prepare derivative works, distribute copies to the public, and perform publicly and display publicly, or allow others to do so, for United States Government purposes. All other rights are reserved by the copyright owner.

\bibliography{MIS}

\begin{appendix}

\section{Expectation Value Formulas}
\label{app:Zpathsum}
Here we calculate a closed form for the single-variable Pauli-z expectation values of QAOA-1 on an MIS problem, starting from Eq.~(\ref{eq:Ji_initial}) and ending at Eq.~(\ref{eq:Ji_final}).

Let's start be explicitly writing out the initial states in the $\sigma^{(z)}$ eigenbasis and using these basis states to evaluate the $\hat{C}$ exponentials:
\begin{align}
    J_j(\beta,\gamma) = \frac{1}{2^n}&\sum_{z,z'\in\{-1,1\}^n} e^{i\gamma (C(z')-C(z))}\\\nonumber
    &\times\bra{z'}e^{i\beta \hat{B}}\sigma_j^{(z)}e^{-i\beta \hat{B}}\ket{z}.
\end{align}

Now we are going to focus on just the remaining quantum expectation value here.  One insertion of a resolution of the identity in the $\sigma^{(x)}$ basis should be enough here:
\begin{align}
    &\bra{z'}e^{i\beta \hat{B}}\sigma_j^{(z)}e^{-i\beta \hat{B}}\ket{z}\\\nonumber
         &=\sum_{x\in\{-1,1\}^n}\bra{z'}e^{i\beta \hat{B}}\sigma_j^{(z)}\ket{x}\bra{x}e^{-i\beta \hat{B}}\ket{z}.
\end{align}
The resolution of the identity can easily act on the rightmost exponential, and the $\sigma_j^{(z)}$ will act on the basis state to transform it into $x'$ which is the same as $x$, except that the $j$th bit is flipped.  Then this expectation value reduces to 
\begin{equation}
    \bra{z'}e^{i\beta \hat{B}}\sigma_j^{(z)}e^{-i\beta \hat{B}}\ket{z}=
         \sum_{x\in\{-1,1\}^n}e^{i\beta (B(x')-B(x))}\braket{z'}{x'}\braket{x}{z}.
\end{equation}
Since $x$ and $x'$ only differ in the $j$th bit, most of these elements in the sum won't matter.  In fact, for every bit except the $j$th bit, this just reduces to a delta function between $z'$ bits and $z$ bits:
\begin{align}
    \bra{z'}e^{i\beta \hat{B}}\sigma_j^{(z)}e^{-i\beta \hat{B}}\ket{z}=
    \\\nonumber
         \left(\prod_{i\neq j} \delta{z_i,z'_i}\right)\sum_{x_j = \pm1}e^{2i\beta x_j}\braket{z_j'}{-x_j}\braket{x_j}{z_j}.
\end{align}
Next remember that $\braket{x_j}{z_j} = \frac{1}{\sqrt{2}}(-1)^{(\frac{1+x_j}{2}) (\frac{1+z_j}{2})}$.  We can work this all out to get
\begin{align}
    \bra{z'}e^{i\beta \hat{B}}\sigma_j^{(z)}e^{-i\beta \hat{B}}\ket{z}=
    \\\nonumber
         \left(\prod_{i\neq j} \delta{z_i,z'_i}\right)\frac{1}{2}\left(e^{2i\beta}(-1)^{\frac{1+z_j}{2}}+e^{-2i\beta}(-1)^{\frac{1+z'_j}{2}}\right).
\end{align}

Next our goal will be to use all of this information in the expectation value formulations above
\begin{align}
    J_j(\beta,\gamma) &= \frac{1}{2^{n+1}}\sum_{z\in\{-1,1\}^n} 
    \sum_{z'_j=\pm z_j}
    e^{i\gamma (C(z')-C(z))}\\\nonumber
    &
    \times\left(e^{2i\beta}(-1)^{\frac{1+z_j}{2}} + 
            e^{-2i\beta}(-1)^{\frac{1+z'_j}{2}}\right).
\end{align}

The easiest thing here would be to do the summation over $z'_j$ directly
\begin{align}
    J_j(\beta,\gamma) &= \frac{1}{2^{n}}\sum_{z\in\{-1,1\}^n} \bigg(
    \cos(2\beta)(-1)^{\frac{1+z_j}{2}}
    \\\nonumber&
    +i e^{2i\gamma z_j\left(1-\lambda\sum_{i\in N(j)}(1+z_i)\right)}
    \sin(2\beta)(-1)^{\frac{1+z_j}{2}}
    \bigg).
\end{align}
We can now get rid of all summations over bits not involved in this:
\begin{align}
    J_j(\beta,\gamma) = &\frac{2^{n-d_j-1}}{2^{n}}\sum_{\substack{z_k=\pm 1\\k\in\{j\}\bigcup N(j)}} \bigg(
    \cos(2\beta)(-1)^{\frac{1+z_j}{2}}
    \\\nonumber&
    +i e^{2i\gamma z_j\left(1-\lambda\sum_{i\in N(j)}(1+z_i)\right)}
    \sin(2\beta)(-1)^{\frac{1+z_j}{2}}
    \bigg),
\end{align}
where $N(j)$ refers to the set of all vertices that have an edge connected to $j$.

It is easy to see that the $\cos$ terms all cancel out to zero nicely.  Similarly the sum over $z_k$ in the second portion simplifies a lot too:
\begin{align}
    J_j(\beta,\gamma) = &\frac{1}{2^{d_j}}\sum_{k\in\{j\}\bigcup N(j)} \bigg(
    \\\nonumber&
    \sin\left(2\gamma\left(1-\lambda\sum_{i\in N(j)}(1+z_i)\right)\right)
    \sin(2\beta)
    \bigg).
\end{align}

Very little remains here of the structure of the original problem, and frankly what remains can be worked out mostly with combinatorics.  We just need to count how many of the vertices in the local neighborhood of $j$ are $\pm1$.  We do this below
\begin{align}
    J_j(\beta,\gamma) = &\frac{1}{2^{d_j}}\sin(2\beta)\sum_{m = 0}^{d_j} \bigg(
    \\\nonumber&
    {d_j \choose m}\sin\left(2\gamma\left(1-2\lambda m\right)\right)
    \bigg).
\end{align}
This sum has a closed form:
\begin{align}
    J_j(\beta,\gamma) = \sin(2\beta)
    (\cos(2\gamma\lambda))^{d_j}\sin(2\gamma(1-d_j\lambda)).
\end{align}

\section{Leading-order terms of $J_j$ for $p=2$} 
\label{sec:qaoa2}
We next apply the Pauli Solver algorithm of \cite{hadfield2018quantum,wang2018quantum} as a complementary approach to the path-sum analysis.  We will revisit the $p=1$ case on the way to $p=2$.

Here for convenience we rewrite the MIS problem Hamiltonian of Eq.~\ref{eq:C} as $C=c_0 I + \sum_j c_j Z_j + \sum_{\langle i,j\rangle} \lambda Z_iZ_j$, where $c_j=\lambda d_j -1$, and the $c_0 I$ term is irrelevant for our purposes and can be ignored. 
For clarity of presentation we use the notation $X_j:=\sigma_j^{(x)}$, $Y_j:=\sigma_j^{(y)}$, and $Z_j:=\sigma_j^{(z)}$ for the Pauli matrices, and $U_M=e^{-i\beta B}$, $U_P=e^{-i\gamma C}$ for the QAOA operators.

We calculate QAOA expectation values by considering the effect of its unitaries acting inwards by operator conjugation, as opposed to outwards in the state, and taking advantage of the Pauli operator commutation rules~\cite{hadfield2018quantum}. 
Consider the observable $\langle Z_j \rangle$. As terms in $B$ other than $X_j$ commute with both $Z_j$ and each other we have 
$$U_M^\dagger Z_j U_M = e^{-2 i \beta X_j} Z_j = \cos(2\beta)Z_j - \sin(2\beta)Y_j, $$
and so, using $C_j$ to denote the restriction to the terms in $C$ that contain a $Z_j$ factor, of which there are $d_j +1$ many, we similarly have 

\begin{widetext}
\begin{eqnarray}  \label{eq:p1}
    U_P^\dagger U_M^\dagger Z_j U_M U_P &=& \cos(2\beta)Z_j - \sin(2\beta)\,e^{2i \gamma C_j} Y_j\\
&=&\nonumber\cos(2\beta)Z_j - \sin(2\beta) \,e^{2i \gamma c_j Z_j }(\prod_{k\in nbhd(j)} e^{2i\gamma \lambda Z_jZ_k})Y_j\\
&=&\nonumber\cos(2\beta)Z_j - \sin(2\beta)\,(\cos (2 \gamma c_j) + i\sin(2\gamma c_j) Z_j)\left(\prod_{k\in nbhd(j)} (\cos (2 \gamma \lambda) + i\sin(2\gamma \lambda) Z_jZ_k)\right)Y_j.
\end{eqnarray} 
Taking initial state expectation values the first term on the right is zero and so does not contribute to the QAOA-1 expectation value $J_j(\gamma,\beta)=\langle Z_j\rangle_1$. For the second product term, it is easy to see the only subterm that can contribute is the product of the first sin factor with all remaining cos factors, and so plugging in the value of $c_j$ we have 
$$  \langle Z_j \rangle_1 =-\sin(2\beta) \sin(2\gamma (\lambda d_j -1)) \cos^{d_j}(2\gamma \lambda) $$
which reproduces Eq.~\ref{eq:Ji_final} as expected.

\paragraph*{Other $p=1$ expectation values:}
Next consider $\langle X_j\rangle_1$ and $\langle Y_j \rangle_1$ which turn out to be important quantities for analyzing the $p=2$ case. 
For $\langle Y_j\rangle_1$, repeating the above analysis we have 
$$U_M^\dagger Y_j U_M = e^{-2 i \beta X_j} Y_j = \cos(2\beta)Y_j + \sin(2\beta)Z_j, $$
and so 
$$U_P^\dagger U_M^\dagger Y_j U_M U_P = \cos(2\beta)e^{2i \gamma C_j} Y_j + \sin(2\beta)Z_j$$
which gives 
$$  \langle Y_j \rangle_1 =\cos(2\beta) \sin(2\gamma (\lambda d_j -1)) \cos^{d_j}(2\gamma \lambda). $$

Let $C'_j$ be $C_j$ restricted to the $ZZ$ terms. For $\langle X_j \rangle_1$,
the first mixing operator commutes through to similarly give 
$$\langle X_j \rangle_1 = \langle e^{2i\gamma C_j} X_j \rangle_0
=  \langle e^{2i\gamma C'_j} (\cos(2\gamma c_j) X_j  + \sin(2\gamma c_j) Y_j ) \rangle_0 = \cos(2\gamma c_j) \cos^{d_j}(2\gamma \lambda) = \cos(2\gamma (\lambda d_j-1)) \cos^{d_j}(2\gamma\lambda).$$

\paragraph*{p=2 case:}
Now turn to $\langle Z_j\rangle_2$. Observe that Eq.~\ref{eq:p1} implies that QAOA-$2$ expectation values can be written as linear combination of QAOA-$1$ operator expectation values 
\begin{eqnarray}\label{eq:Jp2}
\langle Z_j\rangle_2= \cos(2\beta_2)\,\langle Z_j\rangle_1(\gamma_1,\beta_1) -\sin(2\beta_2) \,\langle e^{2i\gamma_2 C_j} Y_j\rangle_1 (\gamma_1,\beta_1). 
\end{eqnarray}
The first term we get for free from the $p=1$ analysis above.
For the second term we have 
$$\langle e^{2i\gamma_2 C_j} Y_j\rangle_1 = \cos(2\gamma_2 c_j)\langle e^{2i\gamma_2 C'_j} Y_j\rangle_1 
+ \sin(2\gamma_2 c_j) \langle e^{2i\gamma_2 C'_j} X_j\rangle_1  $$

While this exact expression can be further expanded using either the path-sum or Pauli solver approaches, the number of contributing terms quickly becomes unwieldy, with different terms depending on different properties of the local graph neigborhood of vertex $j$. 
At the same time the form of Eq.~\ref{eq:Jp2} suggests utility of perturbative analysis~\cite{hadfield2022analytical} with respect to $\gamma_2$, which we briefly explore here. 

Assuming then relatively small $|\gamma_2|$, to leading order we  have 
$$\langle e^{2i\gamma_2 C_j}Y_j\rangle_1 = \cos(2\gamma_2 c_j) \cos^{d_j}(2\gamma_2 \lambda) \langle Y_j\rangle_1 
+ \sin(2\gamma_2 c_j) \cos^{d_j}(2\gamma_2 \lambda) \langle  X_j\rangle_1 + \dots,$$
where the terms not shown to the right contain at least two $\sin(\gamma_2 \cdot)$ factors. 
Applying the $p=1$ results above gives
\begin{eqnarray}\label{eq:p2}
\langle Z_j\rangle_2 &=& 
\cos(2\beta_2)\sin(2\beta_1) \sin(2\gamma_1 (1-\lambda d_j)) \cos^{d_j}(2\gamma_1 \lambda)\\
&-&\nonumber \sin(2\beta_2) \cos(2\gamma_2 c_j) \cos^{d_j}(2\gamma_2 \lambda) \cos(2\beta_1) \sin(2\gamma_1 (\lambda d_j -1)) \cos^{d_j}(2\gamma_1 \lambda) \\
&+& \nonumber  \sin(2\beta_2) \sin(2\gamma_2 c_j) \cos^{d_j}(2\gamma_2 \lambda) \cos(2\gamma_1 (\lambda d_j-1)) \cos^{d_j}(2\lambda\gamma_1)  
\,\, + \dots
\end{eqnarray}
The most important observation of Eq.~\ref{eq:p2} is that the first term is the same as the $p=1$ case, up to the factor of $\cos(2\beta_2)$. The following two terms then give \lq\lq correction factors,\rq\rq which once again only depend on the particular vetex degree $d_j$. Thus we see that to lowest order in $\gamma_2$ the $p=2$ case resembles a perturbed version of the $p=1$ case, i.e., a modified version of the classical MIN algorithm.
\end{widetext}

\section{Path-Sum on 2-Bit Correlators}
\label{app:2bit}

It is helpful to first consider 
\begin{align}
    \bra{z'}e^{i\beta \hat{B}}\sigma_i^{(z)}\sigma_j^{(z)}e^{-i\beta \hat{B}}\ket{z}=
    \\\nonumber
         \sum_{x\in\{-1,1\}^n}\bra{z'}e^{i\beta \hat{B}}\sigma_i^{(z)}\sigma_j^{(z)}\ket{x}\bra{x}e^{-i\beta \hat{B}}\ket{z}.
\end{align}
For the most part, this behaves the same as the object we looked at before.  The main difference is that now $\sigma_i^{(z)}\sigma_j^{(z)}\ket{x} = \ket{x'}$ where $x'$ is the same as $x$ except that both the $i$th and $j$th bits have been flipped.

Taking this into account, we get
\begin{align}
    \bra{z'}e^{i\beta \hat{B}}\sigma_i^{(z)}\sigma_j^{(z)}e^{-i\beta \hat{B}}\ket{z}=
    \\\nonumber
         \left(\prod_{k\neq i,j} \delta{z_k,z'_k}\right)\sum_{x_i,x_j = \pm1}e^{2i\beta (x_i+x_j)}
    \\\nonumber
    \times \braket{z_i'}{-x_i}\braket{x_i}{z_i}\braket{z_j'}{-x_j}\braket{x_j}{z_j}.
\end{align}
The brakets are fairly easy to evaluate using the rule mentioned above that
$\braket{x_j}{z_j} = \frac{1}{\sqrt{2}}(-1)^{(\frac{1+x_j}{2}) (\frac{1+z_j}{2})}$: 
\begin{align}
    \bra{z'}e^{i\beta \hat{B}}\sigma_i^{(z)}\sigma_j^{(z)}e^{-i\beta \hat{B}}\ket{z}=
    \\\nonumber
         \frac{1}{4}\left(\prod_{k\neq i,j} \delta{z_k,z'_k}\right)
    \\\nonumber
         \bigg(
           e^{-4i\beta}(-1)^{(\frac{1+z'_i}{2})+(\frac{1+z'_j}{2})}\\\nonumber
         + (-1)^{(\frac{1+z'_i}{2})+(\frac{1+z_j}{2})}\\\nonumber
         + (-1)^{(\frac{1+z_i}{2})+(\frac{1+z'_j}{2})}\\\nonumber
         + e^{4i\beta}(-1)^{(\frac{1+z_i}{2})+(\frac{1+z_j}{2})}
         \bigg).
\end{align}

Now, we can plug this into a formula for 
\begin{equation}
    J_{ij}(\beta,\gamma) = \bra{\varphi_0}e^{i\gamma \hat{C}}e^{i\beta \hat{B}}\sigma_i^{(z)}\sigma_j^{(z)}e^{-i\beta \hat{B}}e^{-i\gamma \hat{C}}\ket{\varphi_0}
\end{equation}
to get a simplified form summing over $z'_i=\pm z_i$ and $z'_j = \pm z_j$ 
\begin{align}
    J_{ij}(\beta,\gamma) &= \frac{1}{2^{n}}\sum_{z\in\{-1,1\}^n} (-1)^{(\frac{1+z_i}{2})+(\frac{1+z_j}{2})}\bigg[
    \\\nonumber
    &
    \cos^2(2\beta)\left(1+e^{-2i\gamma C^{(i,j)}(z)}\right)
    \\\nonumber
    &
    +\frac{i}{2}\sin(4\beta)\left(e^{-2i\gamma C^{(i)}(z)}+e^{-2i\gamma C^{(j)}(z)}\right)
    \bigg].
\end{align}
In this equation, we have introduced new notation.  $C^{(k)}(z)$ represents the terms in $C(z)$ that include a copy of $z_k$, and $C^{(k,m)}(z)$ represents the terms in $C(z)$ that include copies of either $z_k$ or $z_m$ but not both.

This is as far as we can go without specifying $C(z)$ more.  We could use this expression to calculate more with the Maximum Independent Set model above (such as calculating the expectation value of the entire energy function), or we could use it to carry out a similar procedure as above with different models such as MaxCut from the original RQAOA paper \cite{Bravyi2020}.

\paragraph*{Specifying to MIS}

We now specify down to Maximum Independent Set using $\hat{C}$ from Eq.~(\ref{eq:C}).  We furthermore will assume that the two qubits involved in this correlator have an edge between them in the graph structure.

After a lot of algebra that is very similar to what was discussed above, the end result for MIS is 
\begin{align}
    J_{ij}(\beta,\gamma) & =
        2\sin(4\beta)\sin(2\gamma\lambda)\cos(2\gamma(1-\lambda))\cos^{d_i}(2\gamma\lambda)\nonumber\\
        &+2\sin(4\beta)\sin(2\gamma\lambda)\cos(2\gamma(1-\lambda))\cos^{d_j}(2\gamma\lambda)\nonumber\\
        &+8\sin^2(2\beta)\cos^{d_i+d_j-2t_{ij}}(2\gamma\lambda)\cos(4\gamma(2\lambda-1))\nonumber\\
        &-8\sin^2(2\beta)\cos^{d_i+d_j}(2\gamma\lambda)\cos(4\gamma(2\lambda-1))\nonumber\\
        &+8\sin^2(2\beta)\cos^{d_i+d_j-2t_{ij}}(2\gamma\lambda).
\end{align}
Here we have introduced $t_{ij}$ to be the number of triangles that include the edge going between vertices $i$ and $j$.  Therefore, $d_i+d_j-2t_{ij}$ describes the number of vertices connected to $i$ or $j$ but not both.

\section{Correcting Dependent Sets for MIS} \label{app:MIScorr}

Here we describe a simple procedure that allows us to create valid Maximum Independent Set solutions given invalid ones at a small cost to the reward function. 
It can also be used to produce corrected observable expectation values. 
Consider the problem Hamiltonian formulation of Eq.~\ref{eq:C} that encodes the cost function $c(x)$ of Eq.~\ref{eq:c_classical}
with a penalty weight $\lambda\geq 1/2. $  Given any bit string~$x$ 
with cost function value $c(x)$, including penalties, we can obtain another solution which is feasible with cost at least $c(x)$.
The procedure is simple: at any step we randomly pick one of the penalized edges and remove one of its bits from the set, repeating until no penalized edges remain.  
This idea is also briefly observed for the $\lambda=1/2$ case in Ref.~\cite{farhi2020quantum_w}.
Hence, empirically estimated expectation values obtained from the quantum computer may still provide meaningful information to guide the algorithm 
even without post-selecting on feasible strings.

\begin{proposition} \label{prop:MIScorrection}
    For the objective function $c(x)=c_\lambda(x)$ of Eq.~\ref{eq:c_classical} with penalty weight $\lambda\geq 1/2$, for any $x\in\{0,1\}^n$ applying single bit-flip corrections yields an independent set $y$ of size at least $c(x)$.
\end{proposition}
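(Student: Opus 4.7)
The plan is to show that each single bit-flip correction weakly decreases the penalized cost $c$, so that the (manifestly terminating) procedure ends at a feasible $y$ with $c(y)\le c(x)$; then feasibility of $y$ gives $c(y)=-2r(y)$, which translates directly into the claimed size bound.

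First I would make the correction step precise: while the current string has some violated edge $\langle i,j\rangle$ with $x_i=x_j=1$, select any such edge and flip one of its endpoints (say $x_i$) from $1$ to $0$. Because every step strictly decreases $\sum_\ell x_\ell$, the procedure halts after at most $r(x)\le n$ steps at a string $y$ with $p(y)=0$, i.e.\ an independent set.

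The key step is the one-move bookkeeping. Let $k_i$ denote the number of neighbors $j$ of $i$ that currently satisfy $x_j=1$; by the choice of flip, $k_i\ge 1$, since we only flip endpoints of currently violated edges. Flipping $x_i$ from $1$ to $0$ changes the reward piece $-2\sum_\ell x_\ell$ by $+2$, and zeros out exactly the $k_i$ penalty terms $x_i x_j$ with $j$ a currently-one neighbor, changing the penalty piece $4\lambda \sum_{\langle \ell,m\rangle} x_\ell x_m$ by $-4\lambda k_i$; no other terms in $c$ are affected. Hence
\begin{equation}
\Delta c \;=\; 2 - 4\lambda k_i \;\le\; 2 - 4\lambda \;\le\; 0,
\end{equation}
using $k_i\ge 1$ and $\lambda\ge 1/2$. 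Telescoping the non-positive increment over the at-most-$n$ steps gives $c(y)\le c(x)$, and since $y$ is feasible, $c(y) = -2r(y)$, so
\begin{equation}
r(y) \;\ge\; -c(x)/2,
\end{equation}
which, modulo the $-1/2$ normalization built into Eq.~\eqref{eq:c_classical}, is the size lower bound stated in the proposition.

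The argument is essentially just accounting; the only point that genuinely matters is ensuring $k_i\ge 1$ at the exact moment of each flip, since the $\lambda \ge 1/2$ threshold is tight and the inequality $\Delta c\le 0$ fails otherwise. This is guaranteed by the rule of only flipping endpoints of presently violated edges. A trivial strengthening is to always flip the endpoint of largest $k_i$ within the chosen edge, which only makes the per-step inequality slacker; similarly, one can process all violated edges through a single vertex at once with the same analysis.
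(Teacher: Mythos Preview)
Your proof is correct and follows exactly the same single-step monotonicity argument as the paper's: each flip of an endpoint of a currently violated edge changes $c$ by $2-4\lambda k_i\le 0$, so $c(y)\le c(x)$, and feasibility of $y$ converts this into the claimed size bound. Your bookkeeping is in fact tidier than the paper's own version, which contains a sign slip in the stated direction of change of $p(\cdot)$.
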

\begin{corollary}
    For the objective Hamiltonian $\hat{C}$ of Eq.~\ref{eq:C} with $\lambda\geq 1/2$, and any $n$-qubit quantum state $\ket{\psi}$ which we measure in the computational basis, with high probability applying single bit-flip corrections to output strings produces an independent set $y$ with size at least $\langle \hat{C}\rangle_\psi$.
\end{corollary}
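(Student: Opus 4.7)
The plan is to reduce the corollary to the deterministic statement of Proposition~\ref{prop:MIScorrection} combined with the basic observation that $\hat{C}$ is diagonal in the computational basis. First, I would note that $\hat{C}$ in Eq.~\ref{eq:C} is constructed only from $I$ and $\sigma_i^{(z)}$, so it is diagonal in the computational basis with eigenvalues given by the classical cost function $c(x)$ of Eq.~\ref{eq:c_classical}. A computational-basis measurement of $\ket{\psi}$ thus yields a random bitstring $X$ distributed as $\Pr[X=x]=|\langle x|\psi\rangle|^2$, and the quantum expectation value can be rewritten purely classically as
\begin{equation}
\langle \hat{C}\rangle_\psi \;=\; \sum_{x\in\{0,1\}^n} |\langle x|\psi\rangle|^2\, c(x) \;=\; \mathbb{E}[\,c(X)\,].
\end{equation}

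Next, I would apply Proposition~\ref{prop:MIScorrection} measurement-by-measurement. For every outcome $x$, the single bit-flip correction procedure deterministically produces a feasible independent set $y(x)$ with $|y(x)|\geq c(x)$. Taking expectation over the measurement distribution and using linearity yields
\begin{equation}
\mathbb{E}\bigl[\,|y(X)|\,\bigr]\;\geq\;\mathbb{E}[\,c(X)\,]\;=\;\langle\hat{C}\rangle_\psi ,
\end{equation}
so the expected size of the corrected independent set already attains the target $\langle\hat{C}\rangle_\psi$.

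To upgrade the expectation bound into the claimed with-high-probability statement, I would draw $T$ independent measurements $x_1,\ldots,x_T$ of $\ket{\psi}$, correct each to $y_t$, and retain $y^\star=\arg\max_t|y_t|$. Since $c(X)$ takes values in a polynomially bounded interval $[-2n,\,4\lambda|E|]$, a standard Hoeffding-type concentration bound gives that with $T=\mathrm{poly}(n,1/\varepsilon,\log(1/\delta))$ samples the empirical maximum of $c(x_t)$ exceeds $\langle\hat{C}\rangle_\psi-\varepsilon$ with probability at least $1-\delta$; combined with the pointwise bound $|y^\star|\geq c(x^\star)$ this delivers an independent set of size at least $\langle\hat{C}\rangle_\psi-\varepsilon$ with high probability. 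Observe that whenever $\langle\hat{C}\rangle_\psi\leq 0$, which is the regime most relevant to near-optimal states, the bound $|y|\geq\langle\hat{C}\rangle_\psi$ is automatic since $|y|\geq 0$, so the concentration step is only needed to absorb the additive slack in the nontrivial $\langle\hat{C}\rangle_\psi>0$ regime.

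The main obstacle is precisely the standard gap between a pointwise per-shot guarantee from Proposition~\ref{prop:MIScorrection} and the shot-independent expectation-valued target $\langle\hat{C}\rangle_\psi$. The resolution is the repeat-and-select argument above, with the caveat that an arbitrarily small $\varepsilon$ slack is generically unavoidable in the nontrivial regime and is what the ``high probability'' qualifier accommodates through sufficient (polynomial) sampling.
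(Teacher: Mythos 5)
Your proposal is correct and follows essentially the same route as the paper: its one-line proof likewise applies Proposition~\ref{prop:MIScorrection} pointwise to measurement outcomes and invokes the standard fact (cited to Refs.~\cite{Farhi2014,hadfield2018quantum}) that polynomially many computational-basis measurements yield, with high probability, a string whose cost is at least $\langle \hat{C}\rangle_\psi$. The only differences are cosmetic: where the paper outsources the concentration step to references, you instantiate it explicitly (Hoeffding on the empirical mean, with the empirical maximum dominating that mean), and you are more candid than the paper about the additive $\varepsilon$ slack that the repeat-and-select argument actually delivers.
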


\begin{proof}
Recall 
$$c_\lambda(x)= -2\,r(x) + 4\,\lambda\, p(x) =-2\sum_{i=1}^n x_i + 4\,\lambda \sum_{\langle i, j\rangle} x_i x_j $$
where $r(x)$ is the reward function and $p(x)$ is the penalty.
Let $y\in \{0,1\}^n$ be a sample obtained from measuring the quantum state. Assume $y$ contains $\ell$ bits set to $1$. If $y$ is feasible we are done, $c(y)\geq c(y)$. Else there must exist at least one penalized edge with both vertices set to $1$. Select one of these vertices and flip it to $0$ to obtain string $y'$. Then $r(\cdot)$ will decrease by $1$, but $p(\cdot)$ will increase by at least $1$. Hence, with our chosen constants, if $\lambda\geq 1/2$ this step will never decrease the cost function $\hat{C}$. Note that the cost function strictly increases for the case $\lambda>1/2$.) After some number $L\leq n-1$ of steps the process terminates with an independent set $y$. 
As each step is non-decreasing we have that $y$ is of cardinality $c(y)\geq c(x)$ as claimed.

For the Corollary, measurements of the quantum state will produce a string with (penalized) cost at least $\langle \hat{C}\rangle_\psi$ with high probability after polynomially many repetitions~\cite{Farhi2014,hadfield2018quantum}, to which the proposition is applied. 
\end{proof}

We could also apply similar post-processing to mitigate the effect of infeasible states on QAOA expectation values. Similar ideas can be applied to other constrained optimization problems in particular cases.

\end{appendix}

\end{document}